\theoremstyle{plain}
\newtheorem{theorem}{Theorem}
\newtheorem{lemma}[theorem]{Lemma}
\theoremstyle{definition}
\theoremstyle{definition}
\theoremstyle{remark}
\DeclareMathOperator{\polylog}{polylog}
\DeclareMathOperator{\poly}{poly}
\newcommand{\eps}{\epsilon}
\newcommand{\abs}[1]{\left|#1\right|}
\newcommand{\norm}[1]{\left\lVert#1\right\rVert}
\newcommand{\bbm}{\begin{bmatrix}}
	\newcommand{\ebm}{\end{bmatrix}}
\newcommand{\RR}{\mathbb{R}}
\newcommand{\CC}{\mathbb{C}}
\newcommand{\DD}{\mathbb{D}}
\newcommand{\TT}{\mathbb{T}}
\newcommand{\ZZ}{\mathbb{Z}}
\newcommand{\T}{\mathsf{T}}
\renewcommand{\T}{T}
\renewcommand{\i}{\mathrm{i}}
\newcommand{\diag}{\mathrm{diag}}
\newcommand{\I}{\mathrm{i}}
\newcommand{\half}{\frac{1}{2}}
\newcommand{\mo}{\mathcal{O}}
\newcommand{\tmo}{\tilde{\mathcal{O}}}
\newcommand{\ba}{\mathbf{a}}
\newcommand{\bb}{\mathbf{b}}
\newcommand{\bc}{\mathbf{c}}
\newcommand{\be}{\mathbf{e}}
\newcommand{\bl}{\mathbf{l}}
\newcommand{\bp}{\mathbf{p}}
\newcommand{\bq}{\mathbf{q}}
\newcommand{\bu}{\mathbf{u}}
\newcommand{\bv}{\mathbf{v}}
\newcommand{\bw}{\mathbf{w}}
\newcommand{\by}{\mathbf{y}}
\newcommand{\rev}{\mathrm{rev}}
\newcommand{\bzero}{\mathbf{0}}
\newcommand{\xmark}{\textcolor{black}{\ding{55}}}
\newcommand{\cmark}{\textcolor{black}{\ding{51}}}
\newcommand{\hg}{\hat{G}}
\newcommand{\hr}{\hat{R}}
\newcommand{\hq}{\hat{Q}}
\newcommand{\hu}{\hat{\bu}}
\newcommand{\hv}{\hat{\bv}}
\newcommand{\hU}{\hat{U}}
\newcommand{\hL}{\hat{L}}
\newcommand{\hh}{\hat{H}}
\newcommand{\ema}{\epsilon_{\mathrm{ma}}}
\title{Fast Phase Factor Finding for Quantum Signal Processing}
\author[]{Hongkang Ni} \address[Hongkang Ni]{Stanford University,
  Stanford, CA 94305} \email{hongkang@stanford.edu}
\author[]{Lexing Ying} \address[Lexing Ying]{Stanford University,
  Stanford, CA 94305} \email{lexing@stanford.edu}
\thanks{This work is partially supported by NSF grant DMS-2208163. Support is also acknowledged from the U.S. Department of Energy, Office of Science, Accelerated Research in Quantum Computing Centers, Quantum Utility through Advanced Computational Quantum Algorithms, grant no. DE-SC0025572. We thank Lin Lin and Christoph Thiele for helpful discussions.}
\subjclass[2020]{81-08, 65Z05}
\keywords{numerical algorithm, quantum algorithm, quantum signal processing, nonlinear Fourier transform, structured matrices, iterative }
\begin{document}
	
\begin{abstract}
	This paper presents two efficient and stable algorithms for recovering phase factors in quantum signal processing (QSP), a crucial component of many quantum algorithms. The first algorithm, the ``Half Cholesky" method, which is based on nonlinear Fourier analysis and fast solvers for structured matrices, demonstrates robust performance across all regimes. The second algorithm, ``Fast Fixed Point Iteration," provides even greater efficiency in the non-fully-coherent regime. Both theoretical analysis and numerical experiments demonstrate the significant advantages of these new methods over all existing approaches.
\end{abstract}

\maketitle

\section{Introduction}

\subsection{Background and Problem Setup}
Quantum Signal Processing (QSP) is widely used in various quantum algorithms, providing quantum computers with greater flexibility in addressing linear algebra problems, such as Hamiltonian simulation \cite{LowChuang2017,GilyenSuLowEtAl2019}, linear system and PDE solvers \cite{GilyenSuLowEtAl2019,LinTong2020,MartynRossiTanEtAl2021,li2023efficient}, eigenvalue problems \cite{LinTong2020a,DongLinTong2022}, Gibbs state preparation \cite{GilyenSuLowEtAl2019}, quantum system benchmarking \cite{DongLin2021,DongWhaleyLin2021}, and quantum device calibration \cite{niu2024quantum}, among others. For a more comprehensive review of QSP applications, we refer interested readers to \cite{GilyenSuLowEtAl2019, MartynRossiTanEtAl2021}.

To introduce the mathematical framework underlying QSP, we begin by defining two unitary matrices parameterized by \( x \in [-1, 1] \):
\begin{equation}
	W(x) = \begin{pmatrix}
		x & \I \sqrt{1-x^2} \\
		\I \sqrt{1-x^2} & x
	\end{pmatrix},
	\quad \text{and} \quad \sigma_z = \begin{pmatrix}
		1 & 0 \\
		0 & -1
	\end{pmatrix}.
\end{equation}

Given a polynomial \( f(x) \in \mathbb{R}[x] \) of degree \( n \) with parity \( n \bmod 2 \), defined on the interval \([-1, 1]\), and satisfying
\begin{equation}
	\|f\|_{\infty} = \max_{x \in [-1, 1]} |f(x)| < 1,
\end{equation}
the task of Quantum Signal Processing (QSP) is to find a sequence of phase factors \( \Phi = (\phi_0, \ldots, \phi_n) \in [-\pi, \pi]^{n+1} \) such that \( f(x) = \operatorname{Im}(p(x)) \), where \( p(x) \) is the upper left entry of the unitary matrix
\begin{equation}\label{eq: qsp matrix}
U(x, \Phi) = \begin{pmatrix}
	p(x) & * \\
	* & *
\end{pmatrix} = e^{i \phi_0 \sigma_z} W(x) e^{i \phi_1 \sigma_z} W(x) \cdots e^{i \phi_{n-1} \sigma_z} W(x) e^{i \phi_n \sigma_z}.
\end{equation}
We denote \( g(x, \Phi) := \operatorname{Im}(p(x)) \).

The problem becomes more challenging as \( \|f\|_{\infty} \) approaches 1. To capture this difficulty, for each \( \eta \in (0,1) \), we define the set
\begin{equation}
	\mathbf{S}_{\eta} = \{f : \|f\|_{\infty} \leq 1 - \eta\},
\end{equation}
where the regime of small \( \eta \) is known as the \emph{fully-coherent} regime \cite{dong2023robust}. In the non-fully-coherent regime, which is also common in certain quantum computing tasks, scaling down \( f \) by a constant to increase $\eta$ is permissible \cite{DongLinNiEtAl2022}. However, even in this regime, it is preferable to avoid $\norm{f}_{\infty}$ to be too small. 

Due to the parity constraint, the number of degrees of freedom in the target polynomial \( f \in \mathbb{R}[x] \) is only \( \lceil \frac{n+1}{2} \rceil \). Therefore, the phase factors \( \Phi \) cannot be uniquely defined. To address this issue, Ref.~\cite{DongMengWhaleyEtAl2021} suggests imposing a symmetry constraint on the phase factors:
\begin{equation}
	\phi_j = \phi_{n-j}, \quad \forall j = 0, 1, \ldots, n.
	\label{eqn:symmetry_phase}
\end{equation}
Let \( d = \lceil \frac{n+1}{2} \rceil - 1 \), and define the \emph{reduced phase factors} as follows:
\begin{equation}
	\Psi = (\phi_0, \phi_1, \ldots, \phi_d) := \begin{cases}
		(\phi_d, \phi_{d+1}, \ldots, \phi_n), & \text{if } n \text{ is even}, \\
		(\phi_{d+1}, \phi_{d+2}, \ldots, \phi_n), & \text{if } n \text{ is odd}.
	\end{cases}
\end{equation}
With some abuse of notation, we identify \( U(x, \Psi) \) with \( U(x, \Phi) \), and \( g(x, \Psi) \) with \( g(x, \Phi) \).

The existence of symmetric phase factors was established in \cite[Theorem 1]{WangDongLin2022}, but the solution remains non-unique, though finite in number. However, near the trivial phase factors \( \Psi = (0, \ldots, 0) \), there exists a unique and consistent choice of symmetric phase factors called the maximal solution~\cite{WangDongLin2022}. In this paper, unless otherwise specified, the term ``phase factor'' will always refer to this maximal solution.

Without loss of generality, we restrict our discussion to the case where \( n = 2d \) is even throughout the paper. A similar treatment can be extended to the odd case. We note that our notations and conventions are generally consistent with \cite{alexis2024infinite}, but differ from \cite{DongLinNiEtAl2022}.


\subsection{Main results} 

We developed a provably stable algorithm called ``Half Cholesky'' that is the most efficient among all existing methods for finding the QSP phase factors in all regimes. This algorithm is based on the nonlinear Fourier transform (NLFT) and fast solvers for structured matrices. The main result is summarized in the following theorem. The details of the algorithm are discussed in \Cref{sec: fast NLFT}, with complexity and stability analyzed in \Cref{sec: HC complexity} and \Cref{sec: HC stability}, respectively.

\begin{theorem}\label{thm: RHW_main_thm}
    Assume that $f$ is a $(2d)$-degree even polynomial in $\mathbf{S}_{\eta}$. There exists a deterministic algorithm (\Cref{alg: phase factor finding}) to stably compute the phase factor sequence $\Psi$ to precision $\epsilon$ with a computational cost of $\tmo\left(d^2 + \frac{d\log(d/(\eta\epsilon))}{\eta}\right)$.
\end{theorem}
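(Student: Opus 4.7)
The strategy is to reduce phase factor finding to a nonlinear Fourier transform (NLFT) inversion problem, exploit a Toeplitz-like displacement structure in the underlying linear system via a fast half-Cholesky factorization, and then sharpen the accuracy by iterative refinement whose convergence rate is governed by $\eta$.

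First, I would recast the problem in NLFT form. Writing $z = e^{\I\theta}$ and $x = \cos\theta$, the upper-left entry $p(x)$ of $U(x,\Psi)$ lifts to a pair of Laurent polynomials $(a(z), b(z))$ of degree $d$ satisfying $|a(z)|^2 + |b(z)|^2 = 1$ on the unit circle, where $b$ is determined by $f$ through a standard completion step. The reduced phase factors $\Psi$ are in bijective correspondence with this NLFT data (the maximal solution selecting one branch), so it suffices to compute $(a,b)$ accurately and then read off $\Psi$ by the layer-stripping recursion.

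Second, I would exploit the structure. The layer-stripping recursion can be encoded as the solution of a positive-definite linear system whose coefficient matrix has displacement rank $O(1)$ with respect to a shift operator (essentially a Gram matrix built from $a$ and $b$ in a Szeg\H{o}-type basis). A Schur/GKO-style algorithm produces a single triangular factor in $O(d^2)$ arithmetic operations, without forming both Cholesky factors simultaneously --- this is the ``half'' in Half Cholesky, and it accounts for the $\tmo(d^2)$ term. The output provides a first estimate $\tilde\Psi$ accurate up to the backward error of the direct factorization.

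Third, I would boost the precision to $\epsilon$ by iterative refinement on the NLFT residual. Each residual $g(\cdot,\tilde\Psi) - f$ can be evaluated in $\tmo(d)$ via FFT using the product representation of $U(x,\Psi)$, and the correction solve reuses the fast half-Cholesky routine. Because $f \in \mathbf{S}_\eta$ forces a uniform lower bound on $|a(z)|$ along the unit circle that is polynomial in $\eta$, the associated linear operator has condition number $O(1/\eta)$, so a Richardson/preconditioned-iterate analysis shows that $O(\log(d/(\eta\epsilon))/\eta)$ refinement steps suffice; each costs $\tmo(d)$, yielding the second term $\tmo(d\log(d/(\eta\epsilon))/\eta)$.

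The hardest step will be the finite-precision stability analysis. Fast displacement-rank solvers are notorious for losing stability when leading pivots approach zero, and the fully-coherent regime $\eta \ll 1$ is precisely where such near-degenerate pivots arise. I would expect the key technical lemma to state that restricting to the single triangular factor, and controlling its generators by a monotone quantity bounded in terms of $\min |a|$, gives a backward-stable routine whose error amplification is polynomial in $d$ and $1/\eta$ rather than exponential. Combining this backward error bound with the linear convergence of the refinement loop should then yield the target accuracy $\epsilon$ within the stated complexity.
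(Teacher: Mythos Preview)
Your high-level architecture (NLFT reformulation, displacement-rank fast solve giving the $\tmo(d^2)$ term, stability via backward-error control of the generators) matches the paper. But you have misidentified the source of the second complexity term $\tmo\!\left(\frac{d}{\eta}\log\frac{d}{\eta\epsilon}\right)$, and this is a genuine gap.

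In the paper there is \emph{no} iterative refinement loop. The second term comes entirely from the preprocessing step you dismissed as ``a standard completion step'': computing the Laurent coefficients $c_0,\ldots,c_d$ of $b/a$ to accuracy $\epsilon$ via the Weiss algorithm. Because $|a|^2 = 1-|b|^2 \ge 2\eta - \eta^2$ on $\TT$, the Laurent tail of $b/a$ decays only at a rate controlled by $\eta$, so one must run FFTs on vectors of length $N = \mo\!\left(\frac{d}{\eta}\log\frac{d}{\eta\epsilon}\right)$ to suppress the truncation error; that costs $\tmo(N)$ and produces the second term directly. Once the $c_k$ are in hand, the half-Cholesky step is a \emph{one-shot} $\mo(d^2)$ direct solve, and the stability theorem (backward error $\mo(d^3)\|G\|_F^2\,\ema$ with $\|G\|_F^2 \le \eta^{-1}$, followed by an LU perturbation bound using $\|L\|_2,\|L^{-1}\|_2 \le \eta^{-1/2}$) shows the forward error is already $\mathrm{poly}(d,\eta^{-1})\cdot\ema$ without any refinement.

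Your alternative route --- cheap completion followed by $\mo(\eta^{-1}\log(d/(\eta\epsilon)))$ Richardson-type refinement steps at $\tmo(d)$ each --- is not obviously correct either. You would need a contraction factor of the form $1-\Theta(\eta)$ uniformly over $\mathbf{S}_\eta$, but the only fixed-point contraction the paper establishes (Section~3) requires the stronger hypothesis $\|\hat f\|_1 < 0.861$ and gives a constant rate independent of $\eta$; no $\eta$-dependent contraction is proved for the fully-coherent regime, and the paper explicitly notes that FPI fails there. So as written, your third paragraph both explains the wrong term and rests on an unproved convergence claim.
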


Additionally, we have developed an even faster algorithm called ``Fast Fixed Point Iteration'' for finding QSP phase factors, with complexity nearly linear in the problem size \( d \). This algorithm leverages a fixed-point iteration method combined with a divide-and-conquer technique to accelerate the computation of the QSP matrix \eqref{eq: qsp matrix}. While this approach does not apply to the fully-coherent regime, it is effective for a wide range of target functions \( f \), especially those relevant to common quantum computing tasks. In the non-fully-coherent regime, this is the most efficient algorithm among all existing methods for phase factor finding. The result is summarized in the following theorem, with details discussed in \Cref{sec: fast FPI}.

\begin{theorem}\label{thm: FPI_main_thm}
    Let $f$ be a $(2d)$-degree even polynomial with Chebyshev expansion $f(x) = \sum_{j=0}^{d} \hat{f}_j T_{2j}(x) = \sum_{j=0}^{d} \hat{f}_j \cos(2j\arccos(x))$. If $\|\hat{f}\|_1 := \sum_{j=1}^{d} |\hat{f}_j| < 0.861$, then there exists a deterministic algorithm (see \eqref{eq: FPI iter}) to stably compute the phase factor sequence $\Psi$ to precision $\epsilon$ with a computational cost of $\mo\left(d\log^2 d \log\eps^{-1}\right)$.
\end{theorem}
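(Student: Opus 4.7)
The plan is to combine a fixed-point iteration on $\Psi$ with a fast evaluation of the QSP matrix $U(x,\Psi)$ via divide-and-conquer polynomial arithmetic. The iteration referenced as \eqref{eq: FPI iter} is of the form $\Psi^{(k+1)} = \Psi^{(k)} + \tfrac{1}{2}(\hat{f} - \hat{g}(\Psi^{(k)}))$, where $\hat{g}(\Psi)$ denotes the vector of even-indexed Chebyshev coefficients of $g(\cdot,\Psi)$, initialized at the trivial phase $\Psi^{(0)} = 0$.

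The first step is to show that this map is a contraction in the $\ell^\infty$ norm on $\Psi$ (equivalently, the $\ell^1$ norm on Chebyshev coefficients), with rate $\rho < 1$ whenever $\|\hat{f}\|_1 < 0.861$. The Jacobian of $\Psi \mapsto \hat{g}(\Psi)$ equals $2I$ at $\Psi = 0$, which is why the residual iteration with factor $\tfrac{1}{2}$ contracts locally. The explicit threshold $0.861$ should emerge by bounding the higher-order remainder $\|\hat{g}(\Psi) - 2\Psi\|_1$ on the admissible set $\{\Psi : \|\hat{g}(\Psi)\|_1 \le \|\hat{f}\|_1\}$ using the multilinear structure of the QSP matrix product, and checking that this bound is below $1$ exactly when the hypothesis holds. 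This gives linear convergence to the maximal solution in $\mathcal{O}(\log\epsilon^{-1})$ steps.

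The second step is to evaluate $\hat{g}(\Psi^{(k)})$ in $\mathcal{O}(d\log^2 d)$ operations per iteration. The product in \eqref{eq: qsp matrix} is a $2\times 2$ matrix of polynomials in $x$ of degree at most $d$; I would split the product of length $n+1$ into two halves, recursively compute each half as a $2\times 2$ polynomial matrix of degree $\sim d/2$, and combine the halves via four FFT-based polynomial multiplications. This yields the recurrence $T(d) = 2T(d/2) + \mathcal{O}(d\log d)$, which solves to $T(d) = \mathcal{O}(d\log^2 d)$. Extracting the imaginary part and converting to the Chebyshev basis costs $\mathcal{O}(d\log d)$. For stability, the contraction damps per-iteration errors geometrically, so it suffices that the divide-and-conquer multiplications have polylogarithmic backward error, which is standard for FFT-based polynomial multiplication.

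The main obstacle will be the sharp contraction analysis producing the numerical threshold $0.861$. Local contraction near a maximal solution is essentially present in the analysis of \cite{DongLinNiEtAl2022}, but obtaining a global contraction domain that contains the trivial initialization $\Psi = 0$ requires controlling $\hat{g}(\Psi) - 2\Psi$ uniformly on the admissible set rather than infinitesimally. I expect this to reduce to estimating a sum of multilinear monomials in the entries of $\Psi$ weighted by binomial-type coefficients, optimized so that the contraction constant crosses $1$ precisely at the claimed value of $\|\hat{f}\|_1$.
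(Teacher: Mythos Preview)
Your overall plan matches the paper's: the paper combines the fixed-point iteration \eqref{eq: FPI iter} with a divide-and-conquer FFT evaluation of the QSP matrix, yielding the same $T(d)=2T(d/2)+\mathcal{O}(d\log d)$ recurrence you wrote down. Two points of comparison.

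First, the contraction analysis with the threshold $0.861$ is not re-derived here; the paper simply invokes the convergence result of \cite{DongLinNiEtAl2022}, which already shows that under $\|\hat f\|_1<0.861$ the iteration \eqref{eq: FPI iter} reaches precision $\epsilon$ in $\mathcal{O}(\log\epsilon^{-1})$ steps. What you flag as the ``main obstacle'' is thus not part of this paper's contribution at all --- the new content is solely the $\mathcal{O}(d\log^2 d)$ evaluation of the Chebyshev coefficients of $g(\cdot,\Psi)$.

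Second, your fast-evaluation step has a technical gap as written. The partial products in \eqref{eq: qsp matrix} are \emph{not} $2\times 2$ matrices of polynomials in $x$: the factor $W(x)$ has off-diagonal entries $i\sqrt{1-x^2}$, so intermediate products carry off-diagonal entries of the form $i\sqrt{1-x^2}\cdot(\text{polynomial in }x)$. Your divide-and-conquer can still be made to work by tracking this structure explicitly, but the paper sidesteps the issue more cleanly: it substitutes $t=e^{i\theta}=x+i\sqrt{1-x^2}$, so that $x=\tfrac{t+t^{-1}}{2}$ and $i\sqrt{1-x^2}=\tfrac{t-t^{-1}}{2}$, turning each factor $e^{i\phi_j\sigma_z}W(x)$ into a $2\times 2$ matrix whose entries are honest Laurent polynomials of degree $1$ in $t$. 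The divide-and-conquer then proceeds with ordinary FFT-based Laurent-polynomial multiplication, and the Chebyshev coefficients $q_j$ are read off directly from the Laurent coefficients of the upper-left entry.
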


\subsection{Related works}

Quantum Signal Processing (QSP) was introduced for quantum computing in \cite{LowChuang2017}, where the existence of phase factors was proven non-constructively. Since then, various methods have been developed to efficiently and stably determine the phase factors given a target polynomial $f$. We summarize the features and computational complexities of these methods in \Cref{table: compare}. 

The earliest methods \cite{GilyenSuLowEtAl2019, Haah2019, ChaoDingGilyenEtAl2020} are based on root-finding procedures to determine the missing components of the QSP matrix \eqref{eq: qsp matrix}, referred to as complementary polynomials. However, these methods tend to suffer from numerical instability, requiring $\mo(d\polylog (d/\eps))$ bits of precision \cite{Haah2019}. Prony’s method \cite{Ying2022} circumvents root-finding altogether, offering greater numerical stability by constructing the complementary polynomials directly. However, the stability of the layer stripping process used in most direct methods remains an open question.

Iterative approaches \cite{DongMengWhaleyEtAl2021, DongLinNiEtAl2022} typically exhibit superior numerical stability and efficiency. However, these methods face limitations in the fully-coherent regime, where $\norm{f}_{\infty}$ approaches 1. Newton's method \cite{dong2023robust} demonstrates robustness in this fully-coherent case, with a per-iteration complexity dominated by solving linear equations in $\mo(d^3)$. In the non-fully-coherent setting, this complexity can be reduced to $\mo(d^2\log\eps^{-1})$ by employing iterative solvers for matrix inversions, as the linear systems are well conditioned. These iterative methods rely heavily on the evaluation of the QSP matrix \eqref{eq: qsp matrix}, which has a computational complexity of $\mo(d^2)$. Consequently, the $\mo(d\log^2 d)$ algorithm presented in \Cref{sec: fast FPI} can significantly enhance their efficiency, especially by improving the overall asymptotic complexity of \cite{DongMengWhaleyEtAl2021, DongLinNiEtAl2022}, leading to optimal performance in the non-fully-coherent regime (up to logarithmic factors).

The most recent method, based on the nonlinear Fourier transform (NLFT), establishes a connection between QSP and NLFT, and uses the Riemann-Hilbert-Weiss (RHW) algorithm to compute the phase factors \cite{alexis2024infinite}. This is the first provably numerically stable method that performs well even in the fully-coherent regime. However, its computational complexity is quartic in $d$, which is suboptimal. In this paper, we propose a novel algorithm with quadratic complexity in $d$, making it the fastest among all known methods applicable across all regimes.

\begin{table*}[ht]
\renewcommand{\arraystretch}{1.5}
\resizebox{\textwidth}{!}{%
\begin{tabular}{cc|c|c|c}
\hline\hline
\multirow{2}{*}{}&\multirow{2}{*}{Algorithm} & Fully-coherent  & Non-fully-  & Numerical      \\ 
& & complexity & coherent &  Stability    \\ \hline
&  \cite{GilyenSuLowEtAl2018} & $\tmo(d^3\polylog\eps^{-1})^*$ & $\tmo(d^3\polylog\eps^{-1})^*$ & \xmark    \\
Direct
&  \cite{Haah2019} & $\tmo(d^3\polylog\eps^{-1})$ & $\tmo(d^3\polylog\eps^{-1})$ & \xmark   \\
method
&  \cite{ChaoDingGilyenEtAl2020} & ? & ? & \cmark$^*$    \\
& Prony \cite{Ying2022} & ? & $\tmo(d^2\polylog\eps^{-1})$ & \cmark$^*$    \\ \hline
& LBFGS \cite{DongMengWhaleyEtAl2021} & ? & $\tmo(d^2\log\eps^{-1})^*$  & \cmark   \\
Iterative
& FPI \cite{DongLinNiEtAl2022} & \xmark  & $\tmo(d^2\log\eps^{-1})$ & \cmark   \\
method
& Newton \cite{dong2023robust} & \cmark$^*$ & {\small $\tmo(d^3\log\log\eps^{-1})$ or $\tmo(d^2\log\eps^{-1})^*$} & \cmark   \\
& FFPI (Sec. \ref{sec: acc FPI})& \xmark & {\color{red}$\tmo(d\log\eps^{-1})$} & \cmark   \\ \hline
\multirow{2}{*}{NLFT} 
& RHW \cite{alexis2024infinite} & $\tmo(d^4+d\eta^{-1}\log\eps^{-1})$ & $\tmo(d^4+d\log\eps^{-1})$ & \cmark   \\
& HC (Sec. \ref{sec: fast NLFT}) & {\color{red}$\tmo(d^2+d\eta^{-1}\log\eps^{-1})$} & $\tmo(d^2+d\log\eps^{-1})$ & \cmark    \\\hline\hline
\end{tabular}
}
\caption{The comparison of different phase factor finding algorithms. The asterisk means only work for partial cases or unproved observations. In the $\tmo$ notation, we are omitting higher order $\log$ terms for conciseness, i.e. $\tmo(N) = \mo(N\polylog(N))$ for any expression $N=N(d,\eta,\eps)$. The complexity in the non-fully-coherent regime does not rely on $\eta$ since $\eta=\Omega(1)$ in this case.}
\label{table: compare}
\end{table*}

\subsection{Organization of the paper} 
The remainder of this paper is organized as follows. In \Cref{sec: fast NLFT}, we introduce the ``Half Cholesky'' algorithm. \Cref{sec: rev RHW alg} reviews some preliminaries on NLFT and the RHW algorithm, \Cref{sec: rearrange} to \ref{sec: half Chol} describe the derivation of the new algorithm, and \Cref{sec: HC complexity} to \ref{sec: HC stability} present the complexity and stability analysis. In \Cref{sec: fast FPI}, we introduce the method for accelerating the evaluation of the QSP matrix in \Cref{sec: acc qsp}, and subsequently present the "Fast Fixed Point Iteration" algorithm for retrieving phase factors in \Cref{sec: acc FPI}. Finally, numerical experiments are presented in \Cref{sec: numerics}.

\section{Fast NLFT algorithm for phase factor finding}\label{sec: fast NLFT}
\subsection{Review of NLFT and Riemann-Hilbert-Weiss algorithm}\label{sec: rev RHW alg}
We define $\DD := \{z\in \CC:\abs{z}<1\}$ as the open unit disk, $\TT := \{z\in \CC:\abs{z}=1\}$ as the unit circle, and $\DD^*:= \{z:\abs{z}>1\}\cup\{\infty\}$.

Given a compactly supported\footnote{\cite{DongLinNiEtAl2022} and \cite{alexis2023quantum} pointed out that both QSP and NLFT can be defined for more general sequences that are not compactly supported. Here we use this simpler assumption since we only discuss the algorithm aspect of the problem.} sequence $F = (F_n)_{n\in\ZZ}$ of complex numbers, we recursively define the Laurent polynomials $(a_n(z),b_n(z))$ by
\begin{equation} \label{eq: defn_NLFT_gen}
\begin{pmatrix}
   a_n (z) & b_n (z) \\
   -b_n ^* (z) & a_n ^* (z)
\end{pmatrix} = \begin{pmatrix}
   a_{n-1} (z) & b_{n-1} (z) \\
   -b_{n-1} ^* (z) & a_{n-1} ^* (z)
\end{pmatrix} \frac{1}{\sqrt{1+|F_n|^2}} \begin{pmatrix}
   1 & F_n z^n \\
   -\overline{F_n} z^{-n} & 1
\end{pmatrix}  
\end{equation}
with the initial condition
\begin{equation}
\begin{pmatrix}\label{eq: init_condition_NLFT}
    a_{-\infty} (z) & b_{-\infty} (z) \\
    -b ^* _{-\infty} (z) & a_{-\infty} ^* (z)
\end{pmatrix} = \begin{pmatrix}
    1 & 0 \\ 0 & 1
\end{pmatrix} .
\end{equation}
Here, $a^*(z) := \overline{a(\overline{z^{-1}})}$ for any function $a$. The nonlinear Fourier transform (NLFT) of the sequence $F$ is defined as the pair of Laurent polynomials 
\[
(a(z), b(z)) := (a_{\infty} (z), b_{\infty} (z)) ,
\]
where $(a_{\infty}, b_{\infty}) = (a_n, b_n)$ for all $n$ to the right of the support of $F$. Since all the matrices involved here have determinant $1$, we have
\begin{equation}\label{eq: det_cond_1}
a(z) a^* (z) + b(z) b^* (z) = 1.
\end{equation}
Conversely, given a pair $(a,b)$ satisfying \eqref{eq: det_cond_1}, we may perform the inverse NLFT of it, i.e. recover the sequence $F$, using the Riemann-Hilbert factorization.

For any $\eta > 0$, consider an even polynomial function $f\in \mathbf{S}_{\eta}$. This function can be associated with a Laurent polynomial $b(z) = \I f(x)$, where $x$ and $z$ are connected through the variable substitution 
 \begin{equation}\label{eq: change_of_variables}
         \cos(\theta) = x, \quad z= e^{2\I \theta}.
    \end{equation}
It is proved in \cite[Theorem 4]{alexis2024infinite} that we can construct a unique outer function $a(z)$ such that \eqref{eq: det_cond_1} holds. 


We denote $F$ as the inverse NLFT of this pair $(a, b)$. According to \cite[Lemma 3]{alexis2024infinite}, this $F$ is related to the QSP phase factors as follows:
\begin{equation}\label{eqn:F_psi_mapping}
    F_n = \I \tan(\psi_{\abs{n}}),\quad n\in \ZZ .
\end{equation} 

After establishing this relationship between QSP and NLFT, we can recover $\Psi$ using the following procedure. 

\textbf{Step 1:} Construct the pair $(a,b)$ from the given $f$, and calculate $\frac{b}{a}$. It turns out that $\frac{b}{a}$ can be expand as Laurent series
\begin{equation}\label{eq: c_k defi}
\frac{b(z)}{a(z)} = \sum_{k=-\infty}^{d} c_k z^k,
\end{equation}
with pure imaginary coefficients $c_k$. The coefficients $ c_0, c_1, \ldots, c_d$ can be computed using the Weiss algorithm, which can be found in \cite[Algorithm 2]{alexis2024infinite} and \cite{berntson2024complementary}. We note that in numerical computations, truncation errors may arise due to the impossibility of handling the infinite Laurent series in practice. However, these errors can be controlled by properly choosing the truncation length \cite[Theorem 8]{alexis2024infinite}.

\textbf{Step 2:} The Riemann-Hilbert factorization procedure can be translated as the following algorithm to recover $F_k$, and hence the phase factors $\psi_k$. First, construct a Hankel matrix $\Xi_k$ of size $(d+1-k)\times (d+1-k)$ with $(c_k,\ldots,c_d)^T$ as its first column and $(c_d,0,\ldots,0)$ as its last row. Then solve the linear system 
$$\begin{pmatrix}
	I &  -\Xi_k\\
	-\Xi_k & I 
\end{pmatrix}\begin{pmatrix}
	\ba_k\\
	\bb_k
\end{pmatrix} = \begin{pmatrix}
	\be_0\\
	\bzero
\end{pmatrix}$$
 for $\ba_k$ and $\bb_k$, where $\be_0$ is the first column of the identity matrix. Finally, compute $\psi_k = \arctan\left(-\I \frac{b_{k,0}}{a_{k,0}}\right)$, where $a_{k,0}$ and $b_{k,0}$ are the first entries of $\ba_k$ and $\bb_k$, respectively.
 
 Note that this algorithm requires solving a linear system for each \( \psi_k \), making the process computationally expensive when recovering the entire set of phase factors. In the following subsections, we will introduce an alternative method for this step, significantly improving the computational complexity.

\subsection{Rearrangement of the linear systems}\label{sec: rearrange}
For each integer \( k \in [0,d] \), consider the lower triangular Toeplitz matrix \( T_k \in \mathbb{C}^{(d+1-k) \times (d+1-k)} \) with the first column \( (-c_d, \ldots, -c_k)^T \). This matrix is a rearrangement of the corresponding Hankel matrix \( -\Xi_k \), so the vectors \( \ba_k \) and \( \bb_k \) satisfy the following system:
\begin{equation}\label{eq: k-th eq}
    \begin{pmatrix}
        I& T_k^T\\T_k & I
    \end{pmatrix}\begin{pmatrix}
        \bb_k\\ \rev(\ba_k)
    \end{pmatrix} = \begin{pmatrix}
        \bzero \\ \rev(\be_0)
    \end{pmatrix},
\end{equation}
where \( \rev(\bv) \) represents a vector \( \bv \) in reverse order.

Now let $T = T_0 \in \CC^{(d+1)\times(d+1)}$ with first column $(-c_{d},\ldots, -c_0)^T$, and denote $A$ as the matrix
\begin{equation}
    A:= \begin{pmatrix}
        I& T^T\\T & I
    \end{pmatrix}.
\end{equation}
If we partition \( A \) as a \( 4 \times 4 \) block matrix, with block sizes $d+1-k$, $k$, $d+1-k$, and $k$, then $A$ takes the form
\begin{equation}
    A = \begin{pmatrix}
        I_{d+1-k}&&T_k^T&*\\
        &I_k&&*\\
        T_k&&I_{d+1-k}&\\
        *&*&&I_k
    \end{pmatrix}.
\end{equation}
We may plug it into \eqref{eq: k-th eq}, and get
\begin{equation}
    A\begin{pmatrix}
        \bb_k\\ \bzero\\ \rev(\ba_k) \\ \bzero
    \end{pmatrix} = \begin{pmatrix}
        \bzero \\ \bzero \\ \rev(\be_0)\\ *
    \end{pmatrix}.
\end{equation}
By collecting the equations for all \( k \), we obtain
\begin{equation}
   A \begin{pmatrix}
       b_{d,0} & b_{d-1,0}&\cdots & b_{0,0}\\
        & b_{d-1,1}&&b_{0,1}\\
       &&\ddots&\vdots\\
       &&& b_{0,d}\\
       a_{d,0} & a_{d-1,1}&\cdots & a_{0,d}\\
        & a_{d-1,0}&&a_{0,d-1}\\
       &&\ddots&\vdots\\
       &&& a_{0,0}
   \end{pmatrix}
= \begin{pmatrix}
       &&&\\
       &&&\\
       &&&\\
       &&&\\
       1 & & & \\
       * & 1&&\\
       \vdots&&\ddots&\\
       *&*&\cdots& 1
   \end{pmatrix}
\end{equation}
We can further complement the matrices into triangular ones,
\begin{equation}\label{eq: AS=L_A explicit}
   \begin{pmatrix}
       I& T^T\\T & I
   \end{pmatrix}\begin{pNiceArray}{c|cccc}[margin=2pt]
\Block{4-1}<\Large>{I_{d+1}} 
       &b_{d,0} & b_{d-1,0}&\cdots & b_{0,0}\\
       & & b_{d-1,1}&&b_{0,1}\\
       &&&\ddots&\vdots\\
       &&&& b_{0,d}\\
       \hline
       &a_{d,0} & a_{d-1,1}&\cdots & a_{0,d}\\
       & & a_{d-1,0}&&a_{0,d-1}\\
       &&&\ddots&\vdots\\
       &&&& a_{0,0}
\end{pNiceArray} = \begin{pNiceArray}{c|cccc}[margin=2pt]
\Block{4-1}<\Large>{I_{d+1}} 
       &&&&\\
       &&&&\\
       &&&&\\
       &&&&\\
       \hline
       \Block{4-1}<\Large>{T}
       &1 & & & \\
       &* & 1&&\\
       &\vdots&&\ddots&\\
       &*&*&\cdots& 1
\end{pNiceArray}
\end{equation}
We can denote the two large matrices on both sides of this equation as \( S \) and \( L_A \), respectively. The equation can then be compactly written as:
\[
AS = L_A.
\]
This implies that \( A = L_A S^{-1} \), which gives us the LU factorization of \( A \). 

Therefore, we only need to find the LU factorization \( A = L_A U_A \) first. Then, by inverting the upper triangular factor \( U_A \), we can obtain \( S \). Finally, we can directly read off the values of \( b_{k,0}/a_{k,0} \) from \( S \).


\subsection{Exploiting the block structure of $A$}
When performing the factorization of a matrix with a block structure, it is standard practice to first apply block factorization. Besides reducing computational complexity by a constant factor, this approach offers a more significant advantage in the current context: the resulting smaller system is particularly well-suited for the fast and stable factorization algorithm introduced in the following subsections.

We begin by applying the block LDL factorization to $A$, which gives
\begin{equation}
	\begin{pmatrix}
		I& T^T\\T & I
	\end{pmatrix} = \begin{pmatrix}
		I& \\T & I
	\end{pmatrix}\begin{pmatrix}
		I& \\ & I-TT^T
	\end{pmatrix}\begin{pmatrix}
		I& T^T\\ & I
	\end{pmatrix}.
\end{equation}
Thus, the remaining task is to factorize $I-TT^T$. Defining $B = \i T$, we observe that $B$ is a real lower triangular Toeplitz matrix with its first column given by $\bp = -\i (c_0,\ldots, c_d)^T$. Since the $c_j$ coefficients are purely imaginary, $B$ is real and lower triangular. Consequently,
\begin{equation}
	I-TT^T = I+BB^T
\end{equation}
is a real symmetric positive-definite (SPD) matrix.

Denote the LDL decomposition of $I+BB^T$ as
\begin{equation}
	I+BB^T = LDL^T,
\end{equation}
which can be computed in only $\mo(d^2)$ complexity using the method detailed in \Cref{sec: half Chol}.
Therefore, it follows that the LU factorization of $A$ is
\begin{equation}
	A = L_AU_A = \left[\begin{pmatrix}
		I& \\T & L
	\end{pmatrix}\right]\left[\begin{pmatrix}
		I& \\ & DL^T
	\end{pmatrix}\begin{pmatrix}
		I& T^T\\ & I
	\end{pmatrix}\right].
\end{equation}
This implies that
\begin{equation}
	S = U_A^{-1} = \begin{pmatrix}
		I& -T^T \\ & I
	\end{pmatrix}\begin{pmatrix}
		I& \\ & L^{-T}D^{-1}
	\end{pmatrix} = \begin{pmatrix}
		I& -T^T L^{-T}D^{-1}\\ & L^{-T}D^{-1}
	\end{pmatrix}.
\end{equation}
According to \eqref{eq: AS=L_A explicit}, the relevant elements in this matrix are the diagonal elements of $L^{-T}D^{-1}$, and the first row of $-T^TL^{-T}D^{-1}$. The diagonal entries of $L^{-T}D^{-1}$ are simply the inverses of the diagonal elements of $D$, as $L$ has 1's on the diagonal. The first row of $-T^T L^{-T}D^{-1}$ can be computed by solving the triangular system $(c_d,\ldots,c_0)(DL^T)^{-1}$, which is computationally efficient. To summarize, we have
\begin{equation}
	\begin{pmatrix}
		a_{d,0}\\ \vdots \\ a_{0,0}
	\end{pmatrix} = D^{-1}\begin{pmatrix}
		1\\ \vdots \\ 1
	\end{pmatrix}
	, \text{ and } -\i \begin{pmatrix}
		b_{d,0}\\ \vdots \\ b_{0,0}
	\end{pmatrix} = D^{-1}L^{-1}\bp.
\end{equation}
where the second vector is transposed and multiplied by $-\i$. Therefore, we conclude
\begin{equation}
	\begin{pmatrix}
		\phi_{d}\\ \vdots \\ \phi_{0}
	\end{pmatrix} = \arctan \begin{pmatrix}
		-\i b_{d,0}/a_{d,0}\\ \vdots \\ -\i b_{0,0}/a_{0,0}
	\end{pmatrix}
	= \arctan(L^{-1}\bp),
\end{equation}
where the $\arctan$ function is applied componentwise.

Until now, we have completed the framework for the QSP phase factor finding algorithm, except that the fast LDL decomposition is postponed to \Cref{sec: half Chol}. We summarize the procedure in \Cref{alg: phase factor finding}. This method is termed ``half Cholesky'' (HC) because the goal is to compute $L^{-1}\bp$ rather than the full inverse of $(1+BB^T)^{-1}\bp$.

\begin{algorithm}[htbp]
	\caption{``Half Cholesky'' algorithm for phase factor finding}
	\label{alg: phase factor finding}
	\begin{algorithmic}
		\STATE{\textbf{Input:}  Target function $f$.}
		\STATE{Calculate the Laurent coefficients of $b(z) = \I f(x)$, where $x$ and $z$ are connected through \eqref{eq: change_of_variables}.}
		\STATE{Calculate the Laurent coefficients $\bc = (c_0,\ldots, c_d)^T$ of $\frac{b}{a}$ using Weiss's algorithm \cite[Algotithm 2]{alexis2024infinite}, as outlined in \Cref{sec: rev RHW alg}.}
		\STATE{Define $B$ as the real lower triangular Toeplitz matrix with first column $\bp = -\i\cdot\rev(\bc)$.}
		\STATE{Calculate the LDL factorization $I+BB^T = LDL^T$, and $\by = L^{-1}\bp$ using \Cref{alg: half cholesky}.}
		\STATE{Calculate $\Phi = \rev(\arctan(\by))$.}
		\STATE{\textbf{Output:} Reduced phase factors $\Phi$.}
	\end{algorithmic}
\end{algorithm}

\subsection{Fast LDL factorization using displacement structure}\label{sec: half Chol}
The standard Cholesky factorization of $I+BB^T$ typically incurs a complexity of $\mathcal{O}(d^3)$ since it is a dense matrix. However, we discover that $I+BB^T$ exhibits a \emph{displacement structure}, which is defined in e.g. \cite{sayed1995fast} This enables the development of an efficient LDL decomposition algorithm with complexity $\mathcal{O}(d^2)$. For simplicity, denote $K = I+BB^T$. 
By direct calculation, it can be verified that
\begin{equation}\label{eq: displacement K}
    K - ZKZ^T = GG^T,
\end{equation}
where $Z$ is the lower shift matrix, and $G = [\be_0,\bp]\in \RR^{(d+1)\times 2}$ with $\be_0$ being the first column of the identity matrix, and $\bp$ the first column of $B$. This is a special case of the displacement structure. It has long been established that matrices with displacement structure can be factorized in $\mathcal{O}(d^2)$ time \cite{sayed1995fast} using Schur's algorithm or its variants. For completeness, we briefly outline the algorithm for this specific structure \eqref{eq: displacement K}. 

The key idea is to recover only the first column of $K$ from \eqref{eq: displacement K} based on $G$, instead of explicitly constructing the entire matrix $K$. This is sufficient to construct the first column of the $L$ factor. Moreover, we can identify a similar displacement structure for the first Schur complement of $K$, allowing the process to be applied recursively.

We begin by performing the right QR decomposition of $G$, such that $G = RQ$, where $Q \in SO(2)$ is orthogonal, and the first row of $R$ takes the form $[*, 0]$. From this, we also have a similar displacement equation
\begin{equation}\label{eq: displacement with R}
    K - ZKZ^T = RR^T,
\end{equation}
We may write $R$ explicitly as
\begin{equation}
    R = \begin{pmatrix}
        s & 0\\
        \bu &\bv
    \end{pmatrix},
\end{equation}
Consequently, the first column of $K$ is given by $\begin{pmatrix}
	s^2\\ s\bu
\end{pmatrix}$, and thus the first step of LDL decomposition of $K$ is
\begin{equation}
    K = \begin{pmatrix}
        1 & 0\\
        \bu/s & I
    \end{pmatrix}\begin{pmatrix}
        s^2 & 0\\
        0 & K_1
    \end{pmatrix}\begin{pmatrix}
        1 & \bu^T/s\\
        0 & I
    \end{pmatrix} = \begin{pmatrix}
        s^2&s\bu^T\\s\bu&\bu\bu^T+K_1
    \end{pmatrix},
\end{equation}
where $K_1$ is the first Schur complement of $K$. Substitute this expression into \eqref{eq: displacement with R} and we obtain
$$\begin{pmatrix}
        s^2&s\bu^T\\s\bu&\bu\bu^T+K_1
    \end{pmatrix} - \begin{pmatrix}
        0&\\ \be_0& Z_d
    \end{pmatrix}\begin{pmatrix}
        s^2&s\bu^T\\s\bu&\bu\bu^T+K_1
    \end{pmatrix}\begin{pmatrix}
        0& \be_0^T\\& Z_d^T
    \end{pmatrix}=RR^T = \begin{pmatrix}
        s^2&s\bu^T\\s\bu&\bu\bu^T+\bv\bv^T
    \end{pmatrix},$$
where $Z_d$ is the lower shift matrix of dimension $d$. By comparing the lower-right block, we deduce that $K$'s first Schur complement $K_1$ satisfies
\begin{equation}\label{eq: K2 displacement}
    K_1-Z_d K_1 Z_d^T = \bw\bw^T + \bv\bv^T,
\end{equation}
where $\bw = s\be_0+Z\bu$ is the vector obtained by deleting the last element of the first column of $R$. If denote $G_1 = [\bw, \bv]$, then \eqref{eq: K2 displacement} suggest $K_1$ has the same displacement structure as $K$. Therefore, we can repeat this process and get a sequence of rank 2 matrices $G_0=G, G_1, \ldots, G_d$, and calculate the columns of the LDL decomposition of $K$, without explicitly constructing $K$ or its Schur complements. Once $L$ is constructed, solving the triangular system for $\by = L^{-1}\bp$ becomes straightforward.

The full procedure is summarized in \Cref{alg: half cholesky}. To present the algorithm more compactly, we do not reduce the dimensions of $G_k$ and $R_k$ at each step. Instead, we pad zeros at the top to maintain their dimensions as $(d+1)\times 2$. In particular, we have the correspondence $\bu_0=\begin{pmatrix}
    s\\ \bu
\end{pmatrix}$ and $\bv_0 = \begin{pmatrix}
    0\\ \bv
\end{pmatrix}$.

\begin{algorithm}[htbp]
	\caption{``Half Cholesky'' solver for $K:=I+BB^T$ based on Schur's algorithm.}
	\label{alg: half cholesky}
	\begin{algorithmic}
		\STATE{\textbf{Input:}  Column vector $\bp\in\RR^{d+1}$, which is the first column of $B$.}
		\STATE{Construct $G_0 = [\be_0, \bp]$.}
		\STATE{Initialize $L = [\bl_0, \ldots, \bl_d] = 0\in\RR^{(d+1)\times (d+1)}.$}
		\FOR{$k = 0,1,\ldots, d$}
		\STATE{Calculate the right QR decomposition $G_k = R_kQ_k$, where $R_k = [\bu_k, \bv_k]\in\RR^{(d+1)\times 2}$. (Thus the first $k$ elements of $\bu_k$ and the first $k+1$ elements of $\bv_k$ are all $0$'s.)}
		\STATE{Update the $k$-th column of $L$ as $\bl_{k} = \frac{1}{u_{k,k}}\bu_k$.}
		\STATE{$G_{k+1} = \begin{bmatrix}
				Z\bu_k, \bv_k
			\end{bmatrix}\in \RR^{(d+1)\times 2}$.}
		\ENDFOR
		\STATE{Calculate $\by = L^{-1}\bp$.}
		\STATE{\textbf{Output:} $\by$.}
	\end{algorithmic}
\end{algorithm}

\subsection{Complexity Analysis}\label{sec: HC complexity}

\Cref{alg: phase factor finding} first computes the coefficients \( \bc \) using the Weiss algorithm, followed by the Cholesky factorization as outlined in \Cref{alg: half cholesky}. As previously discussed, \Cref{alg: half cholesky} has a computational complexity of \( \mo(d^2) \). According to \cite[Theorem 8]{alexis2024infinite}, the Weiss algorithm involves performing several fast Fourier transforms on vectors of length \( N = \mo\left(\frac{d}{\eta}\log\left(\frac{d}{\eta\epsilon}\right)\right) \), resulting in a complexity of \( \mo(N\log N) = \tmo(N) = \tmo\left(\frac{d}{\eta}\log\left(\frac{d}{\eta\epsilon}\right)\right) \). Therefore, we conclude that the total computational complexity of \Cref{alg: phase factor finding} is 
$$\tmo\left(d^2 + \frac{d}{\eta}\log\left(\frac{d}{\eta\epsilon}\right)\right).$$

\subsection{Stability Analysis}\label{sec: HC stability}

Assume that the machine precision is $\ema$. We define an algorithm as stable if the final round-off error $\eps$ satisfies $\eps = \mo(\poly(d,\eta^{-1}))\cdot\ema$. This implies that the bit requirement $\log(1/\ema) = \mo(\log(d\eta/\eps))$ scales logarithmically with all parameters, ensuring the algorithm generally performs well on double precision (64-bit) computers.

The main components of \Cref{alg: phase factor finding} and \Cref{alg: half cholesky} are the Fast Fourier Transform (FFT) used in Weiss's algorithm, the triangular system solver, and the LDL decomposition with displacement structure. The stability of the first two components follows standard numerical algebra results \cite{alexis2024infinite, higham2002accuracy, ramos1971roundoff}. Thus, our focus is on the stability of the LDL decomposition with displacement structure.

Many fast solvers for linear systems with displacement structures are unstable. \cite{kailath1999fast} 
However, stability results can be established for certain matrix classes, such as Toeplitz-like SPD matrices \cite{chandrasekaran1996stabilizing, stewart1997stability}. While no existing analysis directly addresses our specific case \eqref{eq: displacement K}, we propose a similar approach for analysis. In the following theorem, we establish a stability result for a fast solver applied to matrices $K$ satisfying \eqref{eq: displacement K} for arbitrary $G$.

\begin{theorem}\label{thm: stability}
	Assume $K$ satisfy \eqref{eq: displacement K} for some $G\in \RR^{(d+1)\times 2}$. The matrix $\hL$ is the matrix we obtain after executing \Cref{alg: half cholesky} using floating point arithmetic of machine precision $\ema$, then we have backward error estimation in Frobenius norm
	\begin{equation}\label{eq: stab backward}
		\norm{\hL\hh^2\hL^T-K}_F\le \mo(d^3)\norm{G}_F^2\ema
	\end{equation}
	for some diagonal matrix $\hh$. We also have the forward error estimation
	\begin{equation}
		\norm{\hat{L}-L}_F\le \mo(d^3\eta^{-1})\norm{G}_F^2\ema.
	\end{equation}
\end{theorem}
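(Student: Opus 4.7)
The plan is to follow the backward error analysis framework developed for Schur-type fast factorizations of Toeplitz-like SPD matrices in \cite{chandrasekaran1996stabilizing, stewart1997stability}, and adapt it to our specific two-column displacement structure \eqref{eq: displacement K}. The backward bound \eqref{eq: stab backward} will come from a step-by-step perturbation analysis of \Cref{alg: half cholesky}, while the forward bound will follow from a standard perturbation theorem for LDL factorizations combined with the observation $K = I + BB^T \succeq I$, which keeps $\lambda_{\min}(K)$ bounded below by $1$.

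First I would perform a per-iteration analysis. At step $k$, the only nontrivial operation is a $(d+1-k) \times 2$ QR decomposition, realized by a single Givens rotation from the right; this is backward stable, so the computed $\hat{R}_k$ satisfies $\hat{R}_k\hat{Q}_k = G_k + E_k$ with $\norm{E_k}_F = \mo(\ema)\norm{G_k}_F$. The shift forming $\hat{G}_{k+1}$ is exact in floating point, and extracting $\hat{\bl}_k = \hat{\bu}_k / \hat{u}_{k,k}$ contributes an additional $\mo(\ema)$ relative error. Next I would control the growth of $\norm{\hat{G}_k}_F$ across iterations. In exact arithmetic, $K_k \succeq 0$ together with \eqref{eq: K2 displacement} gives $\norm{G_{k+1}}_F \le \norm{G_k}_F$; the accumulated roundoff inflates this by at most a factor $1 + \mo(d\ema)$, so $\norm{\hat{G}_k}_F \le 2\norm{G}_F$ throughout when $d\ema = o(1)$.

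Accumulating per-step errors then yields the backward estimate. With $\hh$ taken as the diagonal of square roots of the computed pivots $\hat{u}_{k,k}$, the reconstructed $\hL \hh^2 \hL^T$ differs from the exact $K$ by a sum of $d+1$ rank-two contributions; each is bounded in Frobenius norm by $\mo(d)\norm{G}_F^2\ema$ (the extra $d$ factor arises from the powers of the shift matrix $Z$ that appear when telescoping the local displacement identities back up to $K$), and summing gives $\mo(d^3)\norm{G}_F^2\ema$. For the forward bound I would invoke the standard perturbation theorem for the LDL factorization of an SPD matrix: $\norm{\hL - L}_F \le \mo(\cond(K))\,\norm{\hL\hh^2\hL^T - K}_F$, where $\cond(K) = \mo(\eta^{-1})$ since $\lambda_{\min}(K) \ge 1$ while $\lambda_{\max}(K) \le 1 + \norm{B}^2 = \mo(\eta^{-1})$ via the coefficient bounds of \cite[Theorem 8]{alexis2024infinite}.

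The main obstacle is the telescoping step: it is well known that for general displacement-based fast solvers, perturbations in the generators can amplify catastrophically in the reconstructed matrix, and indeed \cite{kailath1999fast} warns precisely about this. The stability we need therefore relies on two specific features of the present setting, namely the SPD structure $K \succeq I$ and the fact that the rotations $\hat{Q}_k$ are Givens rotations of unit norm; together these prevent the generator norms from blowing up and make the telescoped perturbation controllable. Carefully quantifying this bookkeeping, rather than invoking the Toeplitz-specific analyses of \cite{chandrasekaran1996stabilizing, stewart1997stability} verbatim, is where the real work of the proof lies.
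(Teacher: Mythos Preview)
Your outline is essentially the paper's proof: per-step backward stability of the QR step, control of the generator norms $\norm{\hat G_k}_F$, a telescoping argument to bound $\norm{\hat U\hat U^T-K}_F$, and then LU/LDL perturbation theory together with $\cond(K)=\mo(\eta^{-1})$ for the forward estimate. The paper packages the telescoping step more cleanly than your ``powers of $Z$'' description: it sums the local identities to obtain $\mathcal A(\hat U\hat U^T-K)=\sum_k M_k$ for the displacement operator $\mathcal A(X)=X-ZXZ^T$, and the extra factor of $d$ comes precisely from $\norm{\mathcal A^{-1}}_{\mathrm{op}}=\norm{(I-Z\otimes Z)^{-1}}_2=\mo(d)$; this is worth writing out explicitly, since it is exactly the ``bookkeeping'' you flag as the hard part. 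One minor inconsistency to fix: with a single Givens rotation you have $\norm{E_k}_F=\mo(\ema)\norm{G_k}_F$, so each $M_k$ is $\mo(\ema)\norm{G}_F^2$, and your own accounting (sum of $d{+}1$ terms, each inflated by the $\mo(d)$ from $\mathcal A^{-1}$) yields $\mo(d^2)\norm{G}_F^2\ema$, not $\mo(d^3)$---the paper gets $\mo(d^3)$ because it quotes the Householder bound $\norm{E_k}_F\le C\,d\,\ema\norm{\hat G_k}_F$.
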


\begin{proof}
We introduce a matrix $U := [\bu_0,\ldots,\bu_d]$, and $H := \diag\{u_{00},\ldots, u_{dd}\}$, so that $LH = U$ and $UU^T = K$. We denote $\hat{\cdot}$ as the corresponding values with round-off errors calculated using the floating point arithmetic. In particular, the $\hh$ in \eqref{eq: stab backward} is $\hh=\diag\{\hat{u}_{00},\ldots,\hat{u}_{dd}\}$. 
In the following proof, we always assume that $\mathrm{poly}(d)\ema \ll 1$, as is the standard assumption in stability analysis. We use $C$ with subscripts to represent constants. 

According to the error analysis of QR factorization based on Householder transformations \cite[Theorem 19.4]{higham2002accuracy}, there exists some orthogonal matrix $\tilde{Q}_k$\footnote{We emphasize that this $\tilde{Q}_k$ is an exact orthogonal matrix, and it may differ slightly from the computed $\hq_k$.} such that
\begin{equation}\label{eq: E_k}
    \hr_k\tilde{Q}_k^T = \hg_k + E_k
\end{equation}
with $\norm{E_k}_F\le C_1\ema d\norm{\hg_k}_F$. This means $\norm{\hg_{k+1}}_F\le \norm{\hr_k}_F\le (1+C_1\ema d)\norm{\hg_k}_F$, and thus
\begin{equation}\label{eq: hat G_k}
    \norm{\hg_k}_F\le (1+C_1\ema d)^k \norm{G_0}_F \le C_2 \norm{G_0}_F.
\end{equation}

Multiplying \eqref{eq: E_k} by its transpose, we know that there exists some symmetric matrix $M_k$ such that 
\begin{equation}\label{eq: M_k}
    \hg_k\hg_k^T+M_k = \hr_k\hr_k^T = \hu_k\hu_k^T + \hv_k\hv_k^T,
\end{equation}
with $\norm{M_k}_F\le C_3\ema d\norm{\hg_k}_F^2\le C_4\ema d\norm{G_0}_F^2$.

By definition, we have
$$Z\hu_k\hu_k^T Z^T + \hv_k\hv_k^T = \hg_{k+1}\hg_{k+1}^T.$$
Combined with \eqref{eq: M_k}, we get
\begin{equation}\label{eq: rec}
    Z\hu_k\hu_k^T Z^T + \hv_k\hv_k^T = \hu_{k+1}\hu_{k+1}^T + \hv_{k+1}\hv_{k+1}^T - M_{k+1}.
\end{equation}
Summing \eqref{eq: rec} over $k$, and we may obtain 
$$G_0G_0^T + \sum_{k=0}^d M_k = \hU\hU^T -Z\hU\hU^TZ^T.$$
After subtracting the displacement equation $G_0G_0^T = K-ZKZ^T$, we get
\begin{equation}\label{eq: diff UU^T-K}
    (\hU\hU^T-K) - Z(\hU\hU^T-K)Z^T = \sum_{k=0}^d M_k.
\end{equation}

Define an operator $\mathcal{A}$ on the space of matrices as 
$$\mathcal{A}: X \mapsto X - ZXZ^T.$$
If we flatten the $(d+1)\times (d+1)$ matrix $X$ as an vector of length $(d+1)^2$, then this operator has a matrix form $\mathcal{A} = I - Z\otimes Z$. The Frobenius norm of $X$ is the same as the 2 norm of the flattened vector, so the induced operator norm of $\mathcal{A}^{-1}$ is $\norm{\mathcal{A}^{-1}}_{\mathrm{op}} = \norm{(I-Z\otimes Z)^{-1}}_2 = \mo(d)$. 

Now \eqref{eq: diff UU^T-K} is equivalent to $\mathcal{A}(\hU\hU^T-K) = \sum_{k=0}^d M_k$, therefore
\begin{equation}\label{eq: UU-K est}
    \norm{\hU\hU^T-K}_F \le \norm{\mathcal{A}^{-1}}_{\mathrm{op}}\norm{\sum_{k=0}^d M_k}_F\le C_5\ema d^3\norm{G_0}_F^2.
\end{equation}
Next, we use \eqref{eq: hat G_k} to estimate $\norm{\hU}_F$ as
\begin{equation}\label{eq: hu est}
    \norm{\hU}_F = \sqrt{\sum_{k=0}^d\norm{\hu_k}^2} \le \sqrt{\sum_{k=0}^d\norm{\hg_k}_F^2}\le C_2\sqrt{d+1}\norm{G_0}_F,
\end{equation}
Recall that $L=UH^{-1}$, and $\hL$ is obtained by dividing the $\hU$ by its diagonal component $\hh=\diag\{\hat{u}_{00},\ldots,\hat{u}_{dd}\}$, so we have
\begin{equation}\label{eq: U-LH est}
    \norm{\hU-\hL\hh}_F\le C_6\ema \norm{\hU}_F = C_2C_6\ema\sqrt{d+1}\norm{G_0}_F,
\end{equation}
where the last step uses \eqref{eq: hat G_k}. By exploiting \eqref{eq: hu est} and \eqref{eq: U-LH est}, we obtain
\begin{align*}
    \norm{\hU\hU^T-\hL\hh^2\hL^T}_F  &=\norm{(\hU-\hL\hh)\hU^T-(\hU-\hL\hh)(\hU-\hL\hh)^T + \hU(\hU-\hL\hh)^T}_F\\
    &\le 2\norm{\hU}_F\norm{\hU-\hL\hh}_F + \norm{\hU-\hL\hh}_F^2\le C_7d\ema\norm{G_0}_F^2
\end{align*}
This formula together with \eqref{eq: UU-K est} leads to 
$$\norm{\hL\hh^2\hL^T-K}_F\le \norm{\hL\hh^2\hL^T-\hU\hU^T}_F + \norm{\hU\hU^T-K}_F \le C_8\ema d^3\norm{G_0}_F^2.$$
Notice that $L\cdot (H^2L^T)$ is the LU factorization of $K$, we may use the sensitive analysis of LU factorization \cite[Theorem 9.15]{higham2002accuracy} to deduce that
\begin{equation}
\begin{aligned}
    \norm{\hat{L}-L}_F&\le C_9\norm{L}_2\norm{L^{-1}}_2\norm{(H^2L^T)^{-1}}_2\frac{\norm{K}_2}{\norm{K}_F}\norm{\hL\hh^2\hL^T-K}_F\\
    &\le C_{10} \eta^{-1/2}\cdot \eta^{-1/2}\cdot 1\cdot 1\cdot \ema d^3\norm{G_0}_F^2 = C_{10} \ema d^3\eta^{-1}\norm{G_0}_F^2.
\end{aligned}
\end{equation}
Here we used \Cref{lem: norm est} in the second inequality. 
\end{proof}


\begin{lemma}\label{lem: norm est}We have the following estimations
    $$\norm{L^{-1}}_2\le \mo(\eta^{-1/2}),$$
    $$\norm{L}_2\le \mo(\eta^{-1/2}),$$
    $$\norm{(H^2L^T)^{-1}}_2\le \mo(1).$$
\end{lemma}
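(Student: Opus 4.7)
The plan is to reduce all three estimates to spectral bounds on the SPD matrix $K := I + BB^T = LH^2L^T$ via its standard Cholesky factor $\tilde{L} := LH$, for which $K = \tilde{L}\tilde{L}^T$. This immediately yields $\norm{\tilde{L}}_2 = \sqrt{\lambda_{\max}(K)}$ and $\norm{\tilde{L}^{-1}}_2 = 1/\sqrt{\lambda_{\min}(K)}$, and, combined with elementary bounds on the diagonal matrix $H$, will produce the three inequalities by composition.

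First I would establish two-sided spectral bounds on $K$. Since $BB^T \succeq 0$, we have $K \succeq I$ and hence $\lambda_{\min}(K) \ge 1$, which gives $\norm{\tilde{L}^{-1}}_2 \le 1$ for free. For the upper bound, $\lambda_{\max}(K) = 1 + \norm{B}_2^2$, so it suffices to bound $\norm{B}_2$. Viewing $B$ as the compression to $\mathbb{C}^{d+1}$ of a lower-triangular Toeplitz operator with polynomial symbol $P$, we get $\norm{B}_2 \le \norm{P}_{L^\infty(\mathbb{T})}$. The assumption $f \in \mathbf{S}_{\eta}$ together with the outer property of $a$ give $\abs{a}^2 \ge \eta$ on $\mathbb{T}$, whence $\abs{b/a}^2 \le (1-\eta)/\eta$ on $\mathbb{T}$; a careful identification of $P$ with (a unimodular $z^d$-twist of) the analytic projection $(b/a)_+$ then yields $\norm{B}_2 \le \mo(\eta^{-1/2})$ and therefore $\norm{\tilde{L}}_2 \le \mo(\eta^{-1/2})$.

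Next I would extract the diagonal bounds on $H$. For the lower bound, write $D_{ii} = \mathbf{e}_i^T L^{-1} K L^{-T} \mathbf{e}_i \ge \lambda_{\min}(K)\, \norm{L^{-T} \mathbf{e}_i}_2^2 \ge 1$, using that $L^{-T}$ is unit upper triangular so $(L^{-T}\mathbf{e}_i)_i = 1$; hence $H_{ii} = \sqrt{D_{ii}} \ge 1$ and $\norm{H^{-1}}_2 \le 1$. For the upper bound, the unit diagonal of $L$ gives $\tilde{L}_{ii} = L_{ii} H_{ii} = H_{ii}$, so $H_{ii} \le \norm{\tilde{L}\mathbf{e}_i}_2 \le \norm{\tilde{L}}_2 \le \mo(\eta^{-1/2})$, i.e. $\norm{H}_2 \le \mo(\eta^{-1/2})$. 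The three norm estimates of the lemma then follow: $\norm{L^{-1}}_2 = \norm{H \tilde{L}^{-1}}_2 \le \norm{H}_2 \norm{\tilde{L}^{-1}}_2 \le \mo(\eta^{-1/2})$; $\norm{L}_2 = \norm{\tilde{L} H^{-1}}_2 \le \norm{\tilde{L}}_2 \norm{H^{-1}}_2 \le \mo(\eta^{-1/2})$; and because $H$ is diagonal $HL^T = \tilde{L}^T$, so $(H^2 L^T)^{-1} = L^{-T} H^{-2} = \tilde{L}^{-T} H^{-1}$ and $\norm{(H^2 L^T)^{-1}}_2 \le \norm{\tilde{L}^{-1}}_2 \norm{H^{-1}}_2 \le \mo(1)$.

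The main obstacle is the dimension-free bound $\norm{B}_2 \le \mo(\eta^{-1/2})$. The symbol $P$ of $B$ coincides with a $z^d$-twisted copy of the analytic Fourier projection $(b/a)_+$, and although $\norm{b/a}_{L^\infty(\mathbb{T})}$ is easily controlled via the outer property of $a$, the Riesz projection $P_+$ is not $L^\infty$-bounded in general, so one cannot simply write $\norm{(b/a)_+}_\infty \lesssim \norm{b/a}_\infty$. The cleanest route I see is to exploit the outer factorization $1/a \in H^\infty(\mathbb{D})$ with $\norm{1/a}_{H^\infty} \le \eta^{-1/2}$ and $\norm{b}_{L^\infty(\mathbb{T})} \le 1$, combined with the Weiss construction's guarantee that $c_k = 0$ for $k > d$; this structural constraint should enable a bound on $\norm{P}_{L^\infty(\mathbb{T})}$ independent of $d$, possibly routed through a Hankel reformulation of $B$ and a Nehari-style argument. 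All remaining steps are elementary linear algebra, so this Toeplitz/outer estimate is the single nontrivial analytic input.
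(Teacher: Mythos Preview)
Your approach is essentially the paper's: bound the spectrum of $K=I+BB^T$, pass to the Cholesky factor $\tilde L = LH$ (the paper calls it $U$), extract bounds on $H$ from the diagonals of $\tilde L$ and $\tilde L^{-1}$, and compose. The one place you hesitate is exactly where the paper is most direct: rather than treating $B$ as a Toeplitz matrix with symbol $(b/a)_+$ --- which, as you correctly note, runs into the $L^\infty$-unboundedness of the Riesz projection --- the paper observes that $B=-\I\, Y\Xi_0$, where $Y$ is the counter-diagonal permutation and $\Xi_0$ is the Hankel matrix built from $(c_0,\dots,c_d)$, so $\norm{B}_2=\norm{\Xi_0}_2\le \norm{b/a}_{L^\infty(\TT)}$ immediately (the finite Hankel is a compression of the Hankel operator with symbol $b/a$). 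This is precisely the ``Hankel reformulation and Nehari-style argument'' you propose at the end, so your instinct was right and no outer-function detour is needed; the Toeplitz route can be abandoned.
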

\begin{proof}
	Let $Y$ be the permutation matrix with $1$'s on the counter-diagonal.
	By definition, we have $B = \I T = -\I Y\Xi_0$, where $\Xi_0$ is defined in \Cref{sec: rev RHW alg}. Using this relationship, we may deduce $K = I+BB^T = Y(1-\Xi_0\Xi_0^T)Y = Y(1-(\Xi_0)^2)Y$. Therefore, we may use the estimation in \cite{alexis2024infinite} that
	$$\norm{K}_2 = \left\|I-\left(\Xi_0\right)^2\right\|_2 \leq 1+\left\|\frac{b}{a}\right\|_{L^{\infty}(\TT)}^2,$$
	By \eqref{eq: det_cond_1} and the assumption $\norm{b}_{L^{\infty}(\TT)}\le 1-\eta$, we have
	\begin{equation}\label{eq: b/a inf norm}
		1+\left\|\frac{b}{a}\right\|_{L^{\infty}(\TT)}^2\le 1+\frac{(1-\eta)^2}{1-(1-\eta)^2}\le \eta^{-1}.
	\end{equation}
	Since $K$ is SPD, its eigenvalues and singular values coincide, so we have
	$$1\le \lambda_{\min}(K)\le \lambda_{\max}(K)\le \eta^{-1}.$$
	By the relationship $K = UU^T$, we get its singular value bounds
	$$1\le \sigma_{\min}(U)\le \sigma_{\max}(U)\le \eta^{-1/2}.$$
	Since $H$ is the diagonal part of $U$, we have
	$$\norm{H}_2=\max_k\{u_{kk}\}\le \sigma_{\max}(U)\le \eta^{-1/2}.$$
	As $U$ is lower diagonal, $H^{-1}$ is also the diagonal part of $U^{-1}$, so similarly
	$$\norm{H^{-1}}_2=\max_k\{u_{kk}^{-1}\}\le \sigma_{\max}(U^{-1})\le 1.$$
	Therefore, we have the estimations
	$$\norm{L}_2 = \norm{UH^{-1}}_2 \le \norm{U}_2\norm{H^{-1}}_2 \le \eta^{-1/2}\cdot 1 = \eta^{-1/2},$$
	$$\norm{L^{-1}}_2 = \norm{HU^{-1}}_2 \le \norm{H}_2\norm{U^{-1}}_2 \le \eta^{-1/2}\cdot 1 = \eta^{-1/2},$$
	$$\norm{(H^2L^T)^{-1}}_2 = \norm{U^{-T}H^{-1}}_2 \le \norm{U^{-T}}_2\norm{H^{-1}}_2 \le 1\cdot 1 = 1.$$	
\end{proof}

Now we apply \Cref{thm: stability} to our specific case $K = I + BB^T$, where we have $G = [\be_0, \bp]$. By \eqref{eq: c_k defi}, the vector $\bc = (c_0,\ldots, c_d)^T$ is the partial Fourier coefficient of the function $\frac{b}{a}$.. By Parseval's theorem, we have $\norm{\bc}_2\le \left\|\frac{b}{a}\right\|_{L^{\infty}(\TT)}$. Therefore, it follows from \eqref{eq: b/a inf norm} that
$$\norm{G}_F^2 = \norm{\bp}_2^2+1=\norm{\bc}_2^2+1 \le \left\|\frac{b}{a}\right\|_{L^{\infty}(\TT)}^2+1\le \eta^{-1}.$$
Together with \Cref{thm: stability}, we get the stability result
$$\norm{\hat{L}-L}_F\le \eps = \mo(\poly(d,\eta^{-1}))\cdot\ema.$$

\section{Efficient evaluation of QSP matrix and Fast FPI method}\label{sec: fast FPI}
\subsection{Efficient evaluation of QSP matrix}\label{sec: acc qsp}
An important subroutine in QSP algorithms is the evaluation of the QSP matrix \eqref{eq: qsp matrix}, especially the imaginary part of its upper left entry $g(x,\Phi)$. This evaluation is essential in the iterative methods used in QSP algorithms \cite{DongMengWhaleyEtAl2021, DongLinNiEtAl2022,dong2023robust} and is critical for verifying whether an algorithm generates the correct phase factors.

In this section, we do not restrict $\Phi$ to be symmetric, but still assume $n=2d$ is even for convenience. The odd \( n \) case can be handled by removing the last phase factor and performing an additional matrix multiplication after running the algorithm for even \( n \). In the context of QSP, there are two ways to represent the even polynomial $g(x,\Phi)$. The first representation is through its values on the Chebyshev grid
$$x_j= \cos\left(\frac{2\pi j}{4d+1}\right),\ j = 0,\ldots, d,$$
and the second is via its Chebyshev coefficients $(q_0, q_1, \dots, q_d)$, where
\begin{equation}\label{eq: cheb coef}
	g(x,\Phi) = \sum_{j=0}^{d} q_j T_{2j}(x),
\end{equation}
in which $T_{2j}(x) = \cos(2j\arccos(x))$ is the $(2j)$-th Chebyshev polynomial. These two representations can be converted to each other conveniently using FFT. A direct evaluation of $g(x,\Phi)$ by the definition \eqref{eq: qsp matrix} has complexity $\mathcal{O}(d^2)$, as each $g(x_j,\Phi)$ requires $\mathcal{O}(d)$ time. We aim to reduce this complexity to $\mathcal{O}(d \log^2 d)$.

We begin by introducing the substitution $t = e^{\I \theta} = x + \I \sqrt{1 - x^2}$, transforming the QSP matrix into
\begin{equation}\label{eq: UPQ}
	U(x,\Phi)=  \left(\prod_{j=0}^{n-1} \left[ e^{\I \phi_j \sigma_z} W(x) \right]\right) e^{\I \phi_n \sigma_z}=\left(\prod_{j=0}^{n-1} V_j(t)\right) e^{\I \phi_n \sigma_z},
\end{equation}
where
\begin{equation}
	V_j(t) = \begin{pmatrix}
		e^{\i\phi_j}\frac{t+t^{-1}}{2} & e^{\i\phi_j}\frac{t-t^{-1}}{2} \\
		e^{-\i\phi_j}\frac{t-t^{-1}}{2} & e^{-\i\phi_j}\frac{t+t^{-1}}{2} 
	\end{pmatrix}.
\end{equation}
Thus, computing $U(x,\Phi)$ reduces to finding the product of $n$ matrices, where each matrix entry is a Laurent series of degree 1. Consequently, the resulting matrix $U( \frac{t + t^{-1}}{2}, \Phi ) = U(x,\Phi)$ will have entries that are Laurent series of degree $n$.

Next, we review a fast algorithm that calculates the product of a large number of polynomials. It is well known that the product of two degree-$n$ polynomials can be computed in $\mo(n\log n)$ time using FFT. (Here the product of polynomials means given the coefficients of each polynomial and try to compute the coefficients of their product.) When trying to compute the product of $n$ polynomials $p_1,p_2,\ldots$ with degree 1, we may use the divide and conquer method to leverage this fast polynomial multiplying method \cite[Chapter 4.1]{blahut2010fast}. We may first compute the adjacent products $p_{2k-1}p_{2k}$. Denote $p_{12}:=p_1p_2$, $p_{34} = p_3p_4$, etc. Then compute the product of every group of 4, such as $p_{1234}:=p_{12}p_{34}$, $p_{5678}:= p_{56}p_{78}$, etc. All these polynomial products use FFT. Then after $\mo(\log_2 n)$ rounds, we obtain the coefficients of the product of all $n$ polynomials. It can be verified that the complexity of each round is $\mo(n\log n)$, and hence the total complexity is $\mo(n\log ^2 n)$.

A similar approach applies to the evaluation of the QSP matrix. The matrix elements of the $V_j(t)$'s are Laurent series, so their products can also be computed efficiently using FFT. To calculate the product of the $d$ unitaries $V_j$, we use the same divide-and-conquer approach as with polynomial multiplication: first, compute the product of adjacent pairs, then groups of four, and so on. The intermediate products take the form
\begin{equation}
\prod_{j=m_1}^{m_2-1} V_j(t) = \begin{pmatrix}
    P(t)&Q(t)\\ \overline{Q}(t) & \overline{P}(t)
\end{pmatrix},
\end{equation}
where $P$ and $Q$ represent finite Laurent series of degree $m_2-m_1$, and $\overline{P}$ is the Laurent series obtained by taking conjugate of the coefficient of $P$ for each term. Therefore, when we multiply two matrices of this form, we have
\begin{equation}
\begin{pmatrix}
    P_1(t)&Q_1(t)\\ \overline{Q_1}(t) & \overline{P_1}(t)
\end{pmatrix}\begin{pmatrix}
    P_2(t)&Q_2(t)\\ \overline{Q_2}(t) & \overline{P_2}(t)
\end{pmatrix} = \begin{pmatrix}
    P_{12}(t)&Q_{12}(t)\\ \overline{Q_{12}}(t) & \overline{P_{12}}(t)
\end{pmatrix},
\end{equation}
where $P_{12}(t) = P_1(t)P_2(t)+Q_1(t)\overline{Q_2}(t)$ and $Q_{12}(t) = P_1(t)Q_2(t)+Q_1(t)\overline{P_2}(t)$. These operations involve standard polynomial products and sums, which are efficiently computed using FFT. After constructing the upper left corner of $U(\frac{t+t^{-1}}{2},\Phi) = U(x,\Phi)$ as a Laurant series of the form
$$p(x) = p\left(\frac{t+t^{-1}}{2}\right) = \sum_{j=0}^d h_j\cdot\frac{t^{2j}+t^{-2j}}{2},$$
the Chebyshev coefficients in \eqref{eq: cheb coef} are $q_j = \operatorname{Im}(h_j)$.

\subsection{Application to FPI method: FFPI method}\label{sec: acc FPI}
The FPI algorithm for determining phase factors is introduced in \cite{DongLinNiEtAl2022}, which heavily relies on the evaluation of the QSP matrix. We first introduce a map $F:\RR^{d+1}\to \RR^{d+1}:$
$$\Psi\mapsto\Phi\mapsto \bq = (q_0,\ldots, q_d),$$
where $\Psi$ is a sequence of reduced phase factors, and $\bq$ is defined by \eqref{eq: qsp matrix} as the Chebyshev coefficients of the corresponding full phase factor sequence $\Phi$.\footnote{We adopt different notations from \cite{DongLinNiEtAl2022}, such as a switch in the roles of $\Phi$ and $\Psi$. The $d$ and $\tilde{d}-1$ there means $n$ and $d$ here, and we also drop the $\half$ in definition the first factor $\psi_0$. However, we maintain self-consistent notation throughout this paper.}

Let $\hat{f}=(\hat{f}_0,\ldots,\hat{f}_{d})\in\RR^{d+1}$ denote the Chebyshev coefficients of the objective function $f$, that is
$$f(x) = \sum_{j=0}^d \hat{f}_j T_{2j}(x).$$
Then the QSP problem can be restated as solving a nonlinear equation $F(\Psi) = \hat{f}$. The FPI algorithm is given by the fixed point iteration
\begin{equation}\label{eq: FPI iter}
	\Psi^{(0)} = 0,\quad \Psi^{(k+1)} = \Psi^{(k)} - \half \left(F(\Psi^{(k)}) - \hat{f}\right).
\end{equation}
This iteration relies primarily on the calculation of $F$, which is based on the evaluation of the QSP matrix. It is proved in \cite{DongLinNiEtAl2022} that whenever $\|\hat{f}\|_1\le 0.861$, the FPI method will converge to the desired solution in $\mo(\log\frac{1}{\eps})$ iterations. With our enhancement of the \( F \) evaluation from \( \mathcal{O}(d^2) \) to \( \mathcal{O}(d \log^2 d) \), the overall complexity of the algorithm is consequently reduced to \( \mathcal{O}\left( d \log^2 d \log \frac{1}{\epsilon} \right) \). We name this method as the Fast Fixed Point Iteration method (FFPI).

\section{Numerical experiments}\label{sec: numerics}

We evaluate the performance of the Half Cholesky (HC) method and the Fast Fixed Point Iteration (FFPI) method in the numerical tests presented. All experiments were conducted using MATLAB R2020a on a computer with an Intel Core i5 Quad CPU running at 2.11 GHz and 8 GB of RAM. The error threshold was set to \( \epsilon = 10^{-12} \) for all experiments. The codes are open-sourced in \url{https://github.com/HongkangNi/Fast-Phase-Factor-Finding-for-QSP}.

The first test involves a random polynomial \( f \) with \( \|f\|_{\infty} = 0.5 \). This polynomial was generated by randomly selecting the Chebyshev coefficients of \( f \) and then rescaling it according to its infinity norm. This is a non-fully-coherent example, meaning that most existing methods can successfully solve the QSP problem. We compare the two new methods (HC and FFPI) with three of the fastest existing methods: Newton's method \cite{dong2023robust}, the Fixed Point Iteration (FPI) method \cite{DongLinNiEtAl2022}, and the Riemann-Hilbert-Weiss algorithm \cite{alexis2024infinite}. For comparisons with other, slower methods, we refer readers to \cite{DongMengWhaleyEtAl2021, DongLinNiEtAl2022, dong2023robust}. The runtime comparison is shown in \Cref{fig: test 1}, where it can be seen that HC and FFPI significantly outperform the existing methods in terms of efficiency. Notably, the runtime of FFPI grows almost linearly with \( d \), making it the most efficient method in the non-fully-coherent setting. This observation is consistent with the theoretical analysis.

\begin{figure}[H]
	\centering
	\includegraphics[
	width =0.5\textwidth
	]{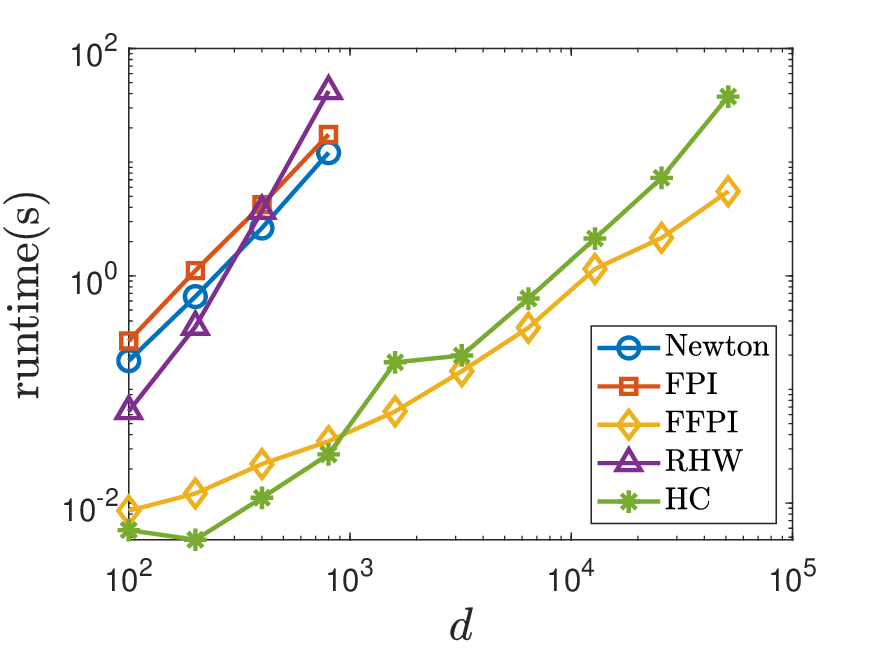}
	\caption{Example for random generated target polynomial $f$ with $\norm{f}_{\infty}=0.5$. This figure shows the runtime comparison among the five methods: Newton, FPI, FFPI, Riemann-Hilbert-Weiss, and Half Cholesky method.}
	\label{fig: test 1}
\end{figure}

The second example involves the even function \( f_{\mathrm{HS}}(x) = \cos(\tau x) \), where \( \tau \) is a parameter. This function is essential for Hamiltonian simulation on quantum computers, and more details can be found in \cite{GilyenSuLowEtAl2019, DongLinNiEtAl2022}. Since \( f_{\mathrm{HS}} \) is not a polynomial, we approximate it using its Taylor series expansion, known as the Jacobi-Anger expansion:
\begin{equation}
	\label{eq;Jacobi-Anger}
	\cos(\tau x) = J_0(\tau) + 2\sum_{k \text{ even}} (-1)^{k/2} J_k(\tau) T_k(x),
\end{equation}
where \( J_k \) are Bessel functions of the first kind. To achieve an approximation accuracy of machine precision \( \epsilon_0 = 10^{-15} \), we truncate the series after \( n = 1.4|\tau| + \log(1/\epsilon_0) \) terms. To demonstrate that the FFPI method performs well in the fully-coherent regime and to ensure that the \( L^{\infty} \) norm of \( f \) is less than 1, we define the function:
\begin{equation}
	f(x) := 0.999\left(J_0(\tau) + 2\sum_{k \text{ even}, k<n} (-1)^{k/2} J_k(\tau) T_k(x)\right).
\end{equation}
This corresponds to \( \eta = 10^{-3} \). The best previous methods for the fully-coherent regime are Newton's method and the Riemann-Hilbert-Weiss algorithm, and we compare them with the newly developed Half Cholesky method in \Cref{fig: test 2}. The FPI and FFPI methods generally fail to converge in this fully-coherent setting. To observe the runtime scaling with respect to the polynomial degree, we note that \( d = \frac{n}{2} \approx 0.7\tau \). The results show that the Half Cholesky method significantly outperforms the other methods, demonstrating its robustness and efficiency across all regimes for the QSP problem.

\begin{figure}[H]
	\centering
	\includegraphics[width=0.5\textwidth]{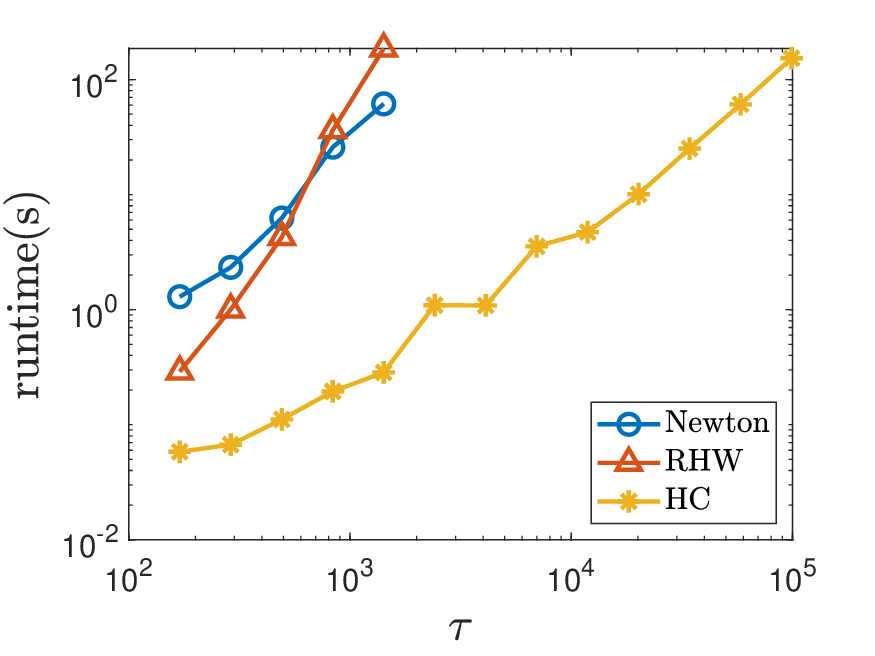}
	\caption{Example for Hamiltonian simulation, with $\norm{f}_{\infty} = 0.999$. This figure shows the runtime comparison for the fully-coherent regime among the three methods: Newton, Riemann-Hilbert-Weiss, and Half Cholesky method.}
	\label{fig: test 2}
\end{figure}

\bibliographystyle{amsalpha}
\bibliography{ref}	

\end{document}